\definecolor{linkcol}{rgb}{0.0,0.55,0.7}
\definecolor{citecol}{rgb}{0.0, 0.6, 0.45}
\definecolor{urlcol}{rgb}{0.7, 0.0, 0.55}
\def\01{\{0,1\}}
\newcommand{\mc}[1]{\mathcal{#1}}
\newcommand{\mb}[1]{\mathds{#1}}
\DeclareMathOperator*{\Ex}{\mathbf{E}}
\let\Pr\relax
\DeclareMathOperator*{\Pr}{\mathbf{Pr}}
\newcommand{\eps}{\epsilon}
\newcommand{\A}{\ensuremath{\mathcal{A}}}
\newcommand{\D}{\ensuremath{\mathcal{D}}}
\newcommand{\G}{\ensuremath{\mathcal{G}}}
\newcommand{\M}{\ensuremath{\mathcal{M}}}
\newcommand{\gen}{{\mathtt{Gen}}}
\renewcommand{\eval}{\mathtt{Eval}}
\newcommand{\parity}{\mathrm{par}}
\newcommand{\poly}{\mathrm{poly}}
\newcommand{\samp}{\mathtt{Sample}}
\newcommand{\stat}{\mathtt{Stat}}
\newcommand{\swap}{\mathtt{SWAP}}
\newcommand{\fswap}{\mathtt{FSWAP}}
\newcommand{\Pf}{\mathrm{Pf}}
\newcommand{\tv}{\mathrm{d}_\mathrm{TV}}
\newcommand{\lr}[1]{\left(#1\right)}
\newcommand{\lrq}[1]{\left[#1\right]}
\newcommand{\lrb}[1]{\left\{#1\right\}}
\newcommand{\fu}{Dahlem Center for Complex Quantum Systems, Freie Universit\"{a}t Berlin, 
14195 Berlin, Germany}
\newtheoremstyle{mydefinitionsty}% 〈name〉
{10pt}% 〈Space above〉
{10pt}% 〈Space below〉
{}% 〈Body font〉
{}% 〈Indent amount〉1
{}% 〈Theorem head font〉
{}% 〈Punctuation after theorem head〉
{.5em}% 〈Space after theorem head〉2
{\textbf{\thmname{#1}~\thmnumber{#2}:  }\thmnote{(#3)}}% 〈Theorem head spec (can be left empty, meaning ‘normal’)〉
\theoremstyle{mydefinitionsty}
\newtheorem{definition}{Definition}
\newtheoremstyle{myproblemsty}% 〈name〉
{10pt}% 〈Space above〉
{10pt}% 〈Space below〉
{}% 〈Body font〉
{}% 〈Indent amount〉1
{}% 〈Theorem head font〉
{}% 〈Punctuation after theorem head〉
{.5em}% 〈Space after theorem head〉2
{\textbf{\thmname{#1}~\thmnumber{#2}:  }\thmnote{(#3)}\newline}% 〈Theorem head spec (can be left empty, meaning ‘normal’)〉
\theoremstyle{myproblemsty}
\newtheorem{problem}{Problem}
\newtheoremstyle{mythmsty}% 〈name〉
{10pt}% 〈Space above〉
{10pt}% 〈Space below〉
{\itshape}% 〈Body font〉
{}% 〈Indent amount〉1
{}% 〈Theorem head font〉
{}% 〈Punctuation after theorem head〉
{.5em}% 〈Space after theorem head〉2
{\textbf{\thmname{#1}~\thmnumber{#2}:  }\thmnote{(#3)}}% 〈Theorem head spec (can be left empty, meaning ‘normal’)〉
\theoremstyle{mythmsty}
\newtheorem{inftheorem}{Informal Theorem}\Crefname{inftheorem}{Informal Theorem}{Informal Theorems}
\newtheorem{lemma}{Lemma}
\newtheorem{corollary}{Corollary}
\newtheorem{conjecture}{Conjecture}
\definecolor{alexcolor}{rgb}{0.0, 0.47, 0.75}   % {0.27, 0.51, 0.71}  %
\definecolor{questioncolor}{rgb}{0.36, 0.54, 0.66}
\title{Free Fermion Distributions Are Hard to Learn}
\author[1]{Alexander Nietner\thanks{Email: \href{mailto:a.nietner@fu-berlin.de}{a.nietner@fu-berlin.de}}}
\affil[1]{\fu}
\date{}
\begin{document}
\maketitle

\begin{abstract}
    Free fermions are some of the best studied quantum systems. 
    However, little is known about the complexity of learning free-fermion distributions: the output distributions of free fermion states after measuring in the occupation number basis. 
    In this work we establish the 
    hardness of this task in the particle number non-preserving case, partially resolving an open question by \cite{aaronsonEfficientTomographyNonInteracting2023}. 
    In particular, we give an information theoretic hardness result for the general task of learning from 
    expectation values and, in the more general case when the algorithm is given access to individual samples, we give a computational hardness result based on the LPN assumption for learning the probability density function. The latter immediately implies the computational hardness of learning a free fermion state with matching output distribution.
\end{abstract}

\section{Introduction}

In this work we investigate the learnability of the output distributions of free fermions after measuring in the occupation number basis.
We refer to those distributions simply as \emph{free fermion distributions}.
This learning task can be seen as the analogue to tomography of free fermion states from a classical perspective. 
The learnability of free fermion distributions was first analyzed in the particle preserving case in an earlier version of \cite{aaronsonEfficientTomographyNonInteracting2023}. 
The learning algorithm by Aaronson and Grewal later turned out to be incomplete as detailed by the authors \cite{aaronsonEfficientTomographyNonInteracting2023} Section 1.2.2, leaving open the question about the learnability of free fermion distributions, which they pose as an open question.

Due to Wicks theorem it is known that tomography of free fermion states can be done efficiently: 
the $n$-point correlators are fully determined by the $2$-point correlators.
In particular, a free fermion state is robustly  determined via its covariance matrix, which in turn is determined by the two point correlators \cite{gluza2018, swingle2019} which can be directly estimated from local expectation values \cite{aaronsonEfficientTomographyNonInteracting2023, nietnerUnifyingQuantumStatistical2023}.  

Here we establish the hardness of learning free fermion distributions, 
which by the previous paragraph immediately yields a strict separation between learning from local observables against learning from samples in the occupation number basis. 
We focus on the Gaussian case, where  the global particle number is not fixed. 
This is, our results do not imply hardness of learning free fermion distributions with a fixed particle number as originally considered by \cite{aaronsonEfficientTomographyNonInteracting2023}. 
While this work brings the hardness-of-learning transition closer to the latter family of distributions, our technique can not easily be applied to fixed particle number free fermion distributions.

As established by \cite{terhal_classical_2002}, free fermions are closely related to local match gate circuits as introduced in \cite{valiant_quantum_2002}. 
In particular, nearest neighbor one dimensional match gate circuits are equivalent to nearest neighbor one dimensional free fermion circuits. 
Thus, the free fermion formalism mildly generalizes the match gate formalism since the former can also be classically simulated when the nearest neighbor criterion is dropped. 
This however can also be emulated via match gates using a network of fermionic $\swap$ operations (i.e. $\fswap$) which comes at an at most linear overhead in the system size. 
Thus, the difference is fairly mild.
We will use the match gate formalism for our technical formulae and give the free fermion interpretation in words.

%%%%
%%%%
%%%%
%%%%
%%%%
%%%%

\subsection{Our Contribution}

We state our results asymptotically in terms of the scaling parameter \(n\) being the number of modes, or likewise the number of qubits when working in the match-gate setting.
Our first  result is concerned with learning algorithms that make use of statistical averages, only.

\begin{inftheorem}[C.f. \Cref{cor:stat-hard} and \Cref{cor:stat-hard-ff}]\label{inf:stat}
Free fermion distributions are exponentially hard to learn from empirical expectation values.
\end{inftheorem}

Interestingly, this result is of information theoretic nature: the hardness is due to the fact that any algorithm requires the values of exponentially many expectation values in order to determine the underlying distribution.
This is in stark contrast to free fermion tomography: 
By Wick's theorem all $k$-point functions can be decomposed into combinations of $2$-point functions. 
Thus, the quadratically many $2$-point functions, which can be estimated from expectation values, suffice to determine the corresponding quantum state\endnotemark.  

\endnotetext{We note that it is actually easy to see that any algorithm for learning free fermion distributions using only two point functions must have some blind spots: The set of free fermion distributions contains all tensor products of even parity distributions (c.f. \Cref{lem:M-s-in-M}). The even parity distribution, on the other hand, has uniform marginals. Thus, there exist pairs of distributions that can not be distinguished by the knowledge of all two point functions.}

Since Wicks theorem is at the core of the free fermion formalism, the lack of its applicability for distribution learning hints at the possibility of a stronger statement. The next statement goes into this direction as that it allows the learner 
to access the unknown distribution
directly on the level of individual samples (instead of only empirical expectation values). 
It relies on the learning parities with noise (LPN) assumption, a standard assumption in cryptography \cite{regevOnLattices2009, pietrzakCryptography2012}.

\begin{inftheorem}[C.f. \Cref{cor:samp-hard} and \Cref{cor:samp-hard-ff}]\label{inf:sample}
Free fermion distributions are hard to learn from samples under the learning parity with noise assumption.
\end{inftheorem}

In other words, learning free fermion distributions takes at least as long as solving LPN.
The best known algorithm for the latter task is the BKW algorithm due to Blum, Kalai and Wasserman (or variations thereof) and runs in time \(2^{O\left(n/\log(n)\right)}\) \cite{blumNoiseTolerantLearningParity2000a}.

\Cref{inf:stat,inf:sample} have interesting consequences on the power of the associated data-access models implicitly assumed. 
While \Cref{inf:stat} assumes the learning algorithm to access the unknown free fermion distribution only by coarse expectation values modelled by statistical queries (see \Cref{d:oracles}), \Cref{inf:sample} assumes the algorithm to have access to individual samples. 
In contrast, as argued above, by Wick's theorem the free fermion state and hence its output distribution can be learned efficiently from empirical expectation values of quantum observables, the latter being modelled by quantum statistical queries, see \cite{nietnerUnifyingQuantumStatistical2023} and similarly \cite{aaronsonEfficientTomographyNonInteracting2023} for the particle number conserving case. 
In particular, due to Theorem 7.6 in \cite {nietnerUnifyingQuantumStatistical2023} free fermion states and hence their corresponding Born distribution can be learned from \(O(n^2)\) quantum statistical queries. 

\begin{corollary}[Information Separation]
    Learning the output distribution of a free fermion state requires exponentially more empirical expectation values from the output distribution compared to empirical expectation values from the underlying state. 
    In particular, \(c=2^{\Omega(\sqrt{q})}=2^{\Omega(n)}=q 2^{\Omega(n)}\) empirical classical expectation values are required in terms of the number of empirical quantum expectation values \(q\).
\end{corollary}

We can similarly interpret \Cref{inf:sample} in terms of an almost exponential
speedup in learning from quantum versus classical data, assuming the optimality of BKW.

\begin{corollary}[Computational Separation]
    There is a super-polynomial computational speedup in learning free fermion distributions from quantum data from the underlying free fermion states versus from the classical data sampled from the distribution.
    In particular, assuming the optimality of BKW, then 
    for any \(\gamma<1\) there is a constant \(0<\gamma^{\prime}<\gamma\), such that the time complexity of learning from classical data \(t_c\) scales as \(t_c=2^{\Omega(t_q^{\gamma^{\prime}})}=t_q 2^{\Omega(n^\gamma)}\)  where \(t_q\) is the time complexity of learning from quantum data.
\end{corollary}

\section{Preliminaries}\label{preliminaries}

These preliminaries are sectioned into two parts. The first part is about distribution learning
and the second part is about free fermion, respectively match gate distributions. 

\subsection{Distribution Learning}

\noindent We denote by $\mathcal{F}_n$ the set of Boolean functions from $\lrb{0,1}^n$ to $\lrb{0,1}$ and by $\D_n$ the set of probability distributions over $\lrb{0,1}^n$. The parity of a bitstring $x$ is denoted by $\abs{x}$ and is given by $\abs{x}=\sum_i x_i\mod2$. A subset $\D\subset\D_n$ is referred to as 
a distribution class. 
For two probability distributions $P,Q:\01^n \rightarrow [0,1]$, we denote by ${\tv(P,Q):= \frac{1}{2} \sum_{x\in \01^n} |P(x)-Q(x)|}$ the total variation distance between them.
For an $n$-qubit quantum state \(\rho\) we refer to the \(n\)-bit distribution \(P(x)=\mathrm{tr}[\ketbra{x}{x}\rho]\) as the output distribution of \(\rho\).
Access to distributions is formalized by an oracle with a specific operational structure. 
In this work, we consider the sample and the statistical query oracle (see also \cite{Kearns:1994:LDD:195058.195155, kearns_efficient_1998, feldman2017statistical}).

\begin{definition}[Distribution oracles]\label{d:oracles}
Given $P\in\mathcal{D}_n$ and some $\tau \in (0,1)$. We define:
\begin{enumerate}
    \item The sample oracle $\samp(P)$ as the oracle which, when queried, provides a sample $x\sim P$. 
    \item The statistical query oracle $\stat_{\tau}(P)$ as the oracle which, when queried with a function $\phi:\{0,1\}^n\rightarrow [-1,1]$, responds with some $v$ such that $|\Ex_{x\sim P}[\phi(x)] - v| \leq \tau$. 
\end{enumerate}
We say that an oracle presents a distribution.
\end{definition}

A learning algorithm has to output the distribution in the sense of some representation.
In this work we focus on two popular representations: generators and evaluators.
These formalize generative and density models, respectively, and are defined as follows. 

\begin{definition}[Representations of a distribution]\label{d:gen}\label{d:eval}
    Given $P\in\D_n$, we say that
    \begin{enumerate}
        \item a probabilistic (or quantum) algorithm $\gen(P)$ is a generator for $P$ if it produces samples according to $x\sim P$.
        \item an algorithm $\eval(P):\lrb{0,1}^n\to[0,1]$ is an evaluator for $P$ if, on input $x\in\lrb{0,1}^n$ it outputs $\eval(P)[x]=P(x)$. 
    \end{enumerate}
    We say that a generator (evaluator) represents a distribution.
\end{definition}

To formalize the notion of \textit{distribution learning},
we use the framework introduced in Ref.~\cite{Kearns:1994:LDD:195058.195155}.
This definition is analogous to the definition of probably-approximately correct function learning, in that it introduces parameters $\eps$ and $\delta$ to quantify approximation error and probability of success. 

\begin{problem}[$(\eps,\delta)$-distribution-learning]\label{prob:eps-del-learning}
Let $\eps,\delta\in(0,1)$ and let $\D$ be a distribution class. Let $\mathcal{O}$ be a distribution oracle. The following task is called $(\eps,\delta)$-distribution-learning $\D$ from $\mc O$ with respect to a  generator (evaluator): Given access to oracle  $\mc O(P)$ for any unknown $P\in\D$, output with probability at least $1-\delta$ a generator (evaluator) of a distribution $Q$ such that $\tv(P,Q)<\eps$. 
\end{problem}

\begin{definition}[Efficiently learnable distribution classes] Let $\mathcal{D}$ be a distribution class, and let $\mathcal{O}$ be a distribution oracle. We say that $\mathcal{D}$ is computationally (query) efficiently learnable from $\mathcal{O}$ with respect to a generator/evaluator, if there exists an algorithm $\mathcal{A}$ which for all $(\epsilon,\delta) \in (0,1)$ solves the problem of $(\epsilon,\delta)$-distribution learning $\mathcal{D}$ from $\mathcal{O}$ with respect to a generator/evaluator, using $O(\mathrm{poly}(n,1/\epsilon,1/\delta))$ computational steps (oracle queries). 
Moreover, the generator/evaluator which is learned must itself be an efficient algorithm, i.e. terminates after $O(\poly(n))$ many steps.
\end{definition}

As we are most often concerned with computational efficiency and with the sample oracle, we often omit these qualifiers in this case, and simply say ``$\mathcal{D}$ is efficiently learnable". If a distribution class is not efficiently learnable, then we say it is hard to learn.

\subsection{Free Fermion and Match Gate Distributions}

Free fermion states are states that can be written as gaussian states in the fermionic creation operators
\begin{align}
    \ket{\psi}=\mc N\exp\lr{\sum_{ij}G_{ij}c^\dagger_ic^\dagger_j}\ket{0}\,,
\end{align}
where $G$ is an anti-symmetric generating matrix, $\mc N$ a normalization term and the $c_i^\dagger$ are the fermionic creation operators while $\ket{0}$ is the vacuum state. The amplitudes in the mode basis are then given by
\begin{align}
    \braket{x}{\psi}=\mc N\Pf\lr{G\vert_x}\,.
\end{align}
Here, $\Pf$ is the Pfaffian and $G\vert_x$ denotes the anti symmetric sub-matrix of $G$ indicated by the $1$'s in the bitstring~$x$. Equivalently, these states can be created by circuits of (particle number non-preserving) free fermion unitaries. As shown in \cite{terhal_classical_2002},  on the level of qubits these correspond to nearest neighbor match gate circuits on a line.

\begin{definition}[Match Gate]\label{def:match-gate}
We say a $2$-qubit unitary of the form
\begin{align}
    U=e^{i\phi}
    \begin{pmatrix}
        W_{11}&0&0&W_{12}\\
        0&Q_{11}&Q_{12}&0\\
        0&Q_{21}&Q_{22}&0\\
        W_{21}&0&0&W_{22}
    \end{pmatrix}
\end{align}
is a match gate  if $W,Q\in\mathrm{SU}(2)$ and $\phi\in[0,2\pi)$. We denote by $\G$ the set of all match gates. Quantum circuits composed of match gates are referred to as match gate circuits. 
\end{definition}

Let us define two particularly important match gates in the context of this work.

\begin{definition}
The fermionic swap gate $\fswap$ is defined as 
\begin{align}
    \fswap=
    \begin{pmatrix}
        1&0&0&0\\
        0&0&1&0\\
        0&1&0&0\\
        0&0&0&-1
    \end{pmatrix}\,.
\end{align}
Moreover we define
\begin{align}
    U_{X}(t)=e^{itX\otimes X}=
    \begin{pmatrix}
        \cos(t/2)&0&0&i\sin(t/2)\\
        0&\cos(t/2)&i\sin(t/2)&0\\
        0&i\sin(t/2)&\cos(t/2)&0\\
        i\sin(t/2)&0&0&\cos(t/2)
    \end{pmatrix}=\cos(t/2)\mb 1_4+i\sin(t/2) X\otimes X\,,
\end{align}
where $X$ is the usual Pauli-$X$ gate.
\end{definition}

The $\fswap$ gate is identical to the $\swap$ gate except for the $-1$ phase in the $((11),(11))$ entry. This phase corresponds to the fermionic statistics when swapping two occupied modes, thus the name fermionic swap. Any non-local two body interaction in the fermionic formalism can be represented by a circuit of local intereactions in the match gate formalism at the price of a linear increase in the depth of the circuit.

Finally, we define the actual class of distributions we are interested in: the Born distributions corresponding to free fermion states when measured in the occupation number basis. This is equivalent to measuring match gate states in the computational basis. Thus we define the following.

\begin{definition}[Match Gate Distributions]\label{def:match-gate-distribution}
We define the class of (nearest neighbour one dimensional) match gate distributions on $n$ qubits at depth $d$ to contain all $n$-bit distributions of the form
\begin{align}
    P(x) = \abs{\mel{x}{U}{0^n}}^2
\end{align}
where $U$ is any nearest neighbour one dimensional match gate circuit on $n$ qubits of depth $d$. 
The set of all such distributions is denoted by $\M(n,d)$ or simply $\M$. 
\end{definition}

The set of one dimensional nearest neighbor free fermion distributions is identical to the set of match gate distributinos at the corresponding depth and particle number. Similarly, we identify the set of non-local free fermion distributions at a given depth with the set of local match gate distributions at the same depth without the $\fswap$ gates taken into account.

\section{Results}

In this section we will show the hardness of learning $\M$ both, from statistical queries as well as from samples. To this end we will embed parities and then noisy parities. The hardness of learning then follows in analogy to \cite{hinsche_single_2022}. We begin with the following definition.

\begin{definition}[Parity]\label{def:parity}
For any $s\in\01^n$ denote by $\chi_s:\01^n\rightarrow\01$ the party function  defined as
\begin{align}
    \chi_s(x)=x\cdot s\,,
\end{align}
where the scalar product is taken in $\mb F_2^n$. Moreover the corresponding {parity distribution} $D_s\in\D_{n+1}$ is defined as
\begin{align}
    D_s(x,y)=
    \begin{cases}
        2^{-n}\,,&\chi_s(x)=y\\
        0\,&\text{else,}
    \end{cases}
\end{align}
with $(x,y)\in\01^n\times\01$ and denote by $\D_\chi$ the set of all such distributions.
For any noise rate $\eta\in[0,1]$ we define the {noisy parity distribution} as 
\begin{align}
    D_s^\eta(x,y)=
    \begin{cases}
        (1-\eta)\cdot2^{-n}\,,&\chi_s(x)=y\\
        \eta\cdot2^{-n}\,&\text{else.}
    \end{cases}
\end{align}
Denote the set of all such distributions by $\D_\chi^\eta$.
\end{definition}

In this work we will focus on two distribution classes derived from those. First consider the ``fermionized'' version of $\D_\chi$ (note that this  ``fermionization'' is not unique). Here, $\abs{x}$ denotes the parity of $x$ and addition of bits refers to the modulo two addition as in $\mb{F}_2$ (such as $\abs{x}+y$ for bit string $x$ and bit $y$).

\begin{definition}[Fermionized Parity]\label{def:fermionized-parity}
For any $s\in\01^n$ define the
fermionized parity function $\xi_s:\01^n\rightarrow\01^2$ as
\begin{align}
    \xi_s(x)=
    \begin{pmatrix}
        \chi_s(x)\\
        \chi_s(x)+\abs{x}
    \end{pmatrix}\,.
\end{align}
Next, we define the
{fermionized parity distribution} $M_s\in\D_{n+2}$ as
\begin{align}
    M_s(x,y,z)=
    \begin{cases}
        2^{-n}\,,&\chi_s(x)=y\,\text{and}\,\abs{x}+y=z\\
        0\,&\text{else.}
    \end{cases}
    =
    \begin{cases}
        2^{-n}\,,&\xi_s(x)=(y,z)\\
        0\,&\text{else.}
    \end{cases}
\end{align}
The set of all such distributions is denoted by $\M_\chi$
\end{definition}

The interpretation of $\xi_s$ and $M_s$ is as follows: Any fermionic state has a parity constraint. Thus, we can not encode the distributions $D_s$ corresponding to a random example oracle for the parity function $\chi_s$ directly into a state on $n+1$ qubits. Rather, we need an auxiliary qubit that takes care of the global parity constraint. This is done via \Cref{def:fermionized-parity}.

The following lemma states an equivalence between $M_s$ and $D_s$.

\begin{lemma}\label{lem:gen-eval-reduction}
An evaluator (generator) for $M_s$ for some unknown $s$ implies an evaluator (generator) for $D_s$ and vice versa.
\end{lemma}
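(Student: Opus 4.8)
The plan is to exhibit explicit, efficient, deterministic conversions in both directions between representations of $D_s \in \D_{n+1}$ and representations of $M_s \in \D_{n+2}$, using the fact that $M_s$ is obtained from $D_s$ merely by appending a single bit $z$ which is a deterministic function of the rest of the string. Concretely, observe that under $M_s$ the triple $(x,y,z)$ has positive mass exactly when $\chi_s(x)=y$ \emph{and} $z = \abs{x}+y$; since $\chi_s(x)=y$ already forces $z = \abs{x}+\chi_s(x)$, the last coordinate carries no extra information. Equivalently, $M_s$ is the pushforward of $D_s$ under the injective map $(x,y)\mapsto(x,y,\abs{x}+y)$, and $D_s$ is the pushforward of $M_s$ under the projection $(x,y,z)\mapsto(x,y)$ (which is injective on the support of $M_s$).

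For evaluators: given $\eval(D_s)$, define $\eval(M_s)[x,y,z]$ to first check whether $z = \abs{x}+y$; if not, output $0$; if so, output $\eval(D_s)[x,y]$. This is correct by \Cref{def:fermionized-parity} and runs in time $O(n)$ on top of one call to $\eval(D_s)$, so it is an efficient evaluator whenever $\eval(D_s)$ is. Conversely, given $\eval(M_s)$, define $\eval(D_s)[x,y] := \eval(M_s)[x,y,\abs{x}+y]$; since for every $(x,y)$ the point $(x,y,\abs{x}+y)$ is the unique $z$-extension that can carry mass, this reproduces $M_s(x,y,\abs{x}+y) = D_s(x,y)$ exactly, again with only $O(n)$ overhead.

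For generators: given $\gen(D_s)$, sample $(x,y)\sim D_s$ and output $(x,y,\abs{x}+y)$; this has exactly the law $M_s$ by the pushforward description, and costs one sample from $\gen(D_s)$ plus $O(n)$ work. Given $\gen(M_s)$, sample $(x,y,z)\sim M_s$ and output $(x,y)$ (discarding $z$); the marginal of $M_s$ on its first $n+1$ coordinates is precisely $D_s$, so this is a correct and efficient generator for $D_s$. In all four reductions the output representation is an efficient algorithm whenever the input one is, which is what is required by the notion of efficient learnability; moreover each reduction composes with an $\eps$-approximation, since appending a deterministic coordinate or marginalizing does not increase total variation distance, so an $\eps$-close $Q$ to one distribution maps to an $\eps$-close representation of the corresponding approximation to the other.

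The main obstacle — really the only subtle point — is making sure the ``vice versa'' directions genuinely preserve the distribution and not merely its support: one must check that feeding the forced value $z=\abs{x}+y$ into $\eval(M_s)$ (respectively, marginalizing out $z$ from $\gen(M_s)$) returns $D_s(x,y)$ with the correct normalization rather than some reweighted version. This follows immediately from the case analysis in \Cref{def:fermionized-parity}, since the map $(x,y)\mapsto(x,y,\abs{x}+y)$ is a bijection between $\mathrm{supp}(D_s)$ and $\mathrm{supp}(M_s)$ that preserves the mass $2^{-n}$ pointwise; I would state this bijection explicitly and let the four conversions fall out of it.
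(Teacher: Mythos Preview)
Your proof is correct and follows essentially the same approach as the paper: both directions use that $M_s$ is the pushforward of $D_s$ under $(x,y)\mapsto(x,y,\abs{x}+y)$, giving the evaluator conversions by checking/forcing $z=\abs{x}+y$ and the generator conversions by appending/discarding the last bit. Your additional remarks on efficiency and $\tv$-stability go slightly beyond what the paper states but are consistent with it.
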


\begin{proof}
In case of an evaluator: 
Assume an evaluator for $M_s$,
let $(x,y)\in\01^n\times\01$. Define $z=\abs{x}+y$. Then we can simulate an evaluator for $D_s$ as $\eval_{D_s}(x,y)=\eval_{M_s}(x,y,z)$. 
Conversely, assuming an evaluator for $D_s$, let $(x,y,z)\in\01^n\times\01^2$. We first check whether $\abs{x}+y=z$. If this is the case we compute $\eval_{M_s}(x,y,z)$ from $\eval_{D_s}(x,y)$. Else we return $0$.

In case of a generator we can simulate $\gen_{D_s}$ by simply discarding the last bit of the output of $\gen_{M_s}$. Conversely, we can simulate $\gen_{M_s}$ from $\gen_{D_s}$ by sampling a bit string $(x,y)$ and appending $\abs{x}+y$.
\end{proof}

\subsection{Statistical Query Lower Bound}

With this in mind the following lemma the leads to hardness of learning $\M_\chi$ in the statistical query setting.

\begin{lemma}[Statistical Query Reduction]\label{lem:stat-reduction}
Any statistical query to $\xi_s$ ($M_s$) with tolerance $\tau$ can be simulated via two statistical queries to $\chi_s$ ($D_s$) with tolerance $\frac\tau2$.
\end{lemma}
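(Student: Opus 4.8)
The plan is to reduce a single statistical query to $M_s$ (equivalently to $\xi_s$) to a small number of statistical queries to $D_s$ (equivalently $\chi_s$). A statistical query to $M_s$ is specified by a test function $\phi:\01^{n+2}\to[-1,1]$, and we must produce an estimate of $\Ex_{(x,y,z)\sim M_s}[\phi(x,y,z)]$ with tolerance $\tau$. The key structural observation is the one already used in \Cref{lem:gen-eval-reduction}: the support of $M_s$ lives entirely on triples with $z=\abs{x}+y$, and under this deterministic relabeling $M_s$ is just $D_s$ pushed forward. So define $\psi:\01^{n+1}\to[-1,1]$ by $\psi(x,y)\defeq\phi(x,y,\abs{x}+y)$; this is still bounded in $[-1,1]$ and efficiently computable. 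Then $\Ex_{(x,y,z)\sim M_s}[\phi(x,y,z)]=\Ex_{(x,y)\sim D_s}[\psi(x,y)]$ exactly.

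At this point one statistical query to $D_s$ with the function $\psi$ and tolerance $\tau$ already suffices — so the apparent content of the lemma is the slightly stronger bookkeeping needed to pass all the way down to $\chi_s$ rather than stopping at $D_s$, and to match the stated ``two queries, tolerance $\tau/2$'' form. The standard trick (this is exactly the reduction from learning $D_s$ to the random-example/parity setting) is that $D_s$ is itself the distribution of $(x,\chi_s(x))$ for uniform $x$. Write $\psi(x,y)=\frac{1+(-1)^y}{2}\psi(x,0)+\frac{1-(-1)^y}{2}\psi(x,1)$, i.e.\ split $\psi$ into its $y=0$ and $y=1$ parts. Under $D_s$ the bit $y$ equals $\chi_s(x)$, so $\Ex_{(x,y)\sim D_s}[\psi(x,y)]=\Ex_{x\sim\Unif}\!\big[\psi(x,\chi_s(x))\big]$, and expanding the indicator of $\chi_s(x)=0$ versus $=1$ as $\tfrac12(1\pm(-1)^{\chi_s(x)})$ rewrites this as a sum of two expectations of the form $\Ex_{x\sim\Unif}[\varphi_i(x)\cdot(-1)^{\chi_s(x)}]$ (plus a $\chi_s$-independent constant), each of which is precisely a statistical query to the parity $\chi_s$ (with a test function valued in $[-1,1]$). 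Estimating each of the two with tolerance $\tau/2$ and combining gives an estimate of the $M_s$-query with tolerance $\tau$, by the triangle inequality.

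Concretely the steps are: (i) lift $\phi$ on $\01^{n+2}$ to $\psi$ on $\01^{n+1}$ via $\psi(x,y)=\phi(x,y,\abs{x}+y)$ and note equality of the two expectations and that $\psi$ stays in $[-1,1]$; (ii) decompose $\psi$ along the last coordinate into its even/odd parts so that, on the support of $D_s$ where $y=\chi_s(x)$, the expectation becomes a $\Unif$-expectation of two terms weighted by $(-1)^{\chi_s(x)}$; (iii) identify each term as a legitimate $\stat_{\tau/2}(\chi_s)$ query, issue the two queries, and recombine; (iv) track tolerances: two errors of size $\le\tau/2$ add to $\le\tau$. The $\xi_s$ versus $M_s$ and $\chi_s$ versus $D_s$ equivalences are the same statement viewed as a function-valued oracle versus its induced distribution oracle, so the argument is literally the same in both formulations.

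I expect the only real subtlety — and it is minor — to be verifying that every intermediate test function genuinely maps into $[-1,1]$ (so that it is an admissible statistical-query function in the sense of \Cref{d:oracles}) and that the constant/affine pieces produced by the even–odd decomposition are handled without blowing up the range; this is routine but is where an off-by-a-factor-of-two in the tolerance would sneak in if one is careless. There is no deep obstacle here: the lemma is essentially the observation that the fermionic padding bit $z$ is a deterministic function of $(x,y)$, so no information and no query budget is really lost.
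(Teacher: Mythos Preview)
Your proposal is correct, and the core move is the same as the paper's: exploit that on the support of $M_s$ the auxiliary bit $z$ is the deterministic function $z=\abs{x}+y$, so the $M_s$-expectation of $\phi$ equals the $D_s$-expectation of a derived $[-1,1]$-valued function on $\01^{n+1}$. The paper writes this derived function as the sum $\hat\phi_e+\hat\phi_o$ (splitting according to whether $\abs{x}+y=0$ or $1$) and issues two $D_s$-queries; your $\psi(x,y)=\phi(x,y,\abs{x}+y)$ is exactly $\hat\phi_e+\hat\phi_o$, and your observation that a \emph{single} $D_s$-query at tolerance $\tau$ already suffices is in fact a slight sharpening of the stated bound.

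Your steps (ii)--(iii) arise from a misreading of the statement: in the lemma the parenthetical ``$\chi_s$ ($D_s$)'' just means these are two names for the same oracle (the distribution $D_s$ \emph{is} the random-example oracle for $\chi_s$), not that one must further reduce $D_s$-queries to some separate parity-correlation oracle. So that extra decomposition is unnecessary here---though it is carried out correctly and does no harm.
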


\begin{proof}
Let $\phi:\01^{n}\times\01^2\rightarrow[-1,1]$ be a function which we want to query.
Then $\phi$ can be decomposed in three parts $\phi=\phi_e+\phi_o+\phi_0$ where:
\begin{itemize}
    \item $\phi_e(x,y,z)=\vcentcolon \hat\phi_e(x,y)$ is non-zero only when $(x,y,z)$ is such that $\abs{x}+y=z=0$.
    \item $\phi_o(x,y,z)=\vcentcolon \hat\phi_o(x,y)$ is non-zero only when $(x,y,z)$ is such that $\abs{x}+y=z=1$.
    \item $\phi_0$ is such that $\Ex_{M_s}[\phi_0]=0$ for all $s$.
\end{itemize}
Thus
\begin{align}
    \Ex_{(x,y,z)\sim M_s}\lrq{\phi(x,z,y)}=
    \Ex_{(x,z)\sim D_s}\lrq{\hat\phi_e(x,y)}+
    \Ex_{(x,z)\sim D_s}\lrq{\hat\phi_o(x,y)}\,,
\end{align}
which implies that two statistical queries to $D_s$ with tolerance $\frac\tau2$ suffice to simulate a statistical query to $M_s$ with tolerance $\tau$.
\end{proof}

Thus, any statistical query algorithm for learning $\M_\chi$ implies a statistical query algorithm for learning $\D_\chi$ with twice the query complexity. The latter is known to be exponentially hard \cite{blum_weakly_1994} thus the former must be hard, too. This  gives the following lemma.

\begin{lemma}\label{lem:stat-m-chi-hard}
Let $\epsilon<1/4$ and $\delta<1/2-2^{-3n}$. $(\epsilon,\delta)$-learning $\M_\chi$ requires at least $\Omega(2^{n/3-2})$ statistical queries with tolerance $\tau=\Omega(2^{-n/3+1})$.
\end{lemma}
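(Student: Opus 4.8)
The plan is to reduce the claimed lower bound for $\M_\chi$ directly to the known statistical-query hardness of learning parities, via the two reductions already established. First I would invoke \Cref{lem:stat-reduction}: any SQ learner for $\M_\chi$ making $q$ queries with tolerance $\tau$ can be converted into an SQ learner for $\D_\chi$ making $2q$ queries with tolerance $\tau/2$. Combined with \Cref{lem:gen-eval-reduction}, a successful $(\epsilon,\delta)$-learner for $\M_\chi$ yields a successful $(\epsilon,\delta)$-learner for $\D_\chi$, since an evaluator (generator) for the learned approximation $Q$ of $M_s$ transfers to an evaluator (generator) for the corresponding distribution on $n+1$ bits, and the TV distance is preserved under the deterministic bijection $(x,y)\mapsto(x,y,\abs{x}+y)$ — indeed $\tv(M_s, Q) = \tv(D_s, Q')$ where $Q'$ is the push-forward, so approximation quality $\epsilon$ carries over unchanged.

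Next I would cite the statistical-query lower bound for learning the parity class, due to \cite{blum_weakly_1994} (the standard SQ-dimension argument: the $2^n$ parity functions are pairwise nearly orthogonal as $\pm1$ functions, so the SQ dimension is exponential). The quantitative form one needs is that $(\epsilon,\delta)$-learning $\D_\chi$ with $\epsilon<1/4$ and $\delta$ bounded away from $1/2$ requires $\Omega(2^n)$ queries at tolerance $\tau' = \Omega(2^{-n/2})$ — or, more carefully, that with $q$ queries at tolerance $\tau'$ one must have $q \cdot {\tau'}^{-2} = \Omega(2^n)$, reflecting the trade-off between query count and precision. I would then back-substitute: a $\M_\chi$-learner with $q$ queries at tolerance $\tau$ gives a $\D_\chi$-learner with $2q$ queries at tolerance $\tau/2$, so $2q\cdot(\tau/2)^{-2} = 8q\tau^{-2} = \Omega(2^n)$. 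Setting $\tau = \Theta(2^{-n/3+1})$ forces $q = \Omega(2^{n} \cdot 2^{-2n/3}) = \Omega(2^{n/3})$, and tracking the constants from the factor-of-two losses and the $\delta$ slack gives the stated $\Omega(2^{n/3-2})$ and $\tau=\Omega(2^{-n/3+1})$.

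The one genuinely delicate point is matching the exact parameter regime $\delta < 1/2 - 2^{-3n}$: the factor-of-two blow-up in query count from \Cref{lem:stat-reduction} interacts with the precise statement of the parity lower bound, which typically is phrased as ``any learner succeeding with probability noticeably above a trivial baseline needs many queries.'' I would need the version of \cite{blum_weakly_1994} (or a standard strengthening, e.g. via Feldman's SQ-dimension framework) that tolerates a success probability as close to $1/2$ as $1/2 - 2^{-3n}$, and I would verify that the $2^{-3n}$ slack survives the reduction — it does, because the reduction is exact and deterministic apart from the query simulation, so the success probability is preserved verbatim. The remaining arithmetic (choosing the constant in $\tau$ so that $\tau/2$ still lies in the admissible range and the resulting $q$-bound cleans up to $2^{n/3-2}$) is routine bookkeeping; I expect the $n/3$ exponent to emerge exactly as the balance point where $q$ and $\tau^{-2}$ contribute equally to the $2^n$ budget, i.e. $q \sim \tau^{-2} \sim 2^{n/3}$.
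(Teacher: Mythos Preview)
Your reduction via \Cref{lem:stat-reduction} is the same opening move as the paper's, and the overall architecture is right. The substantive difference is in the \emph{target} of the reduction. You reduce $\M_\chi$-distribution-learning to $\D_\chi$-distribution-learning (transporting the learned representation through \Cref{lem:gen-eval-reduction}) and then invoke a lower bound for the latter. The paper instead reduces directly to parity \emph{function} learning: once the learner outputs any representation of some $M$ with $\tv(M,M_s)<\epsilon<1/4$, the paper observes that $\tv(M_r,M_t)\geq 1/2$ for $r\neq t$, so $s$ is the unique string with $\tv(M,M_s)<1/4$ and can be recovered by brute-force comparison \emph{without any further oracle queries}. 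This plugs straight into Theorem~12 of \cite{blum_weakly_1994}, which is stated for SQ \emph{function} learning of parities.

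Two things are worth noting about the gap between these. First, the result you cite from \cite{blum_weakly_1994} is a function-learning bound, not a distribution-learning bound for $\D_\chi$; to turn it into the latter you would still need exactly the separation-plus-brute-force argument the paper uses (one level down, for $D_s$ rather than $M_s$). So your route is not wrong, but it hides the step the paper makes explicit. Second, the paper's version is slightly stronger: because the brute-force recovery of $s$ works for \emph{any} representation of $M$ (not just evaluators or generators), the lower bound is genuinely information-theoretic and representation-agnostic, whereas your transfer via \Cref{lem:gen-eval-reduction} commits you to evaluators/generators and requires checking that TV distance survives the map (which it does, but the learned $Q$ need not be supported on the even-parity slice, so ``bijection'' is not quite the right word; data processing for the marginal is what you actually use).

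Finally, your trade-off heuristic $q\cdot\tau^{-2}=\Omega(2^n)$ happens to land on the right exponents, but the paper does not derive the $n/3$ balance this way; it simply quotes the specific parameter pair $\tau'=\Omega(2^{-n/3})$, $q=\Omega(2^{n/3})$ from \cite[Theorem~12]{blum_weakly_1994} and absorbs the factor-of-two losses from \Cref{lem:stat-reduction} into the constants.
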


Note, that we did not clarify the representation with respect to which the learner has to learn. 
This is, because the proof is of information theoretic nature and thus holds for any representation. 

\begin{proof}
The proof is analogous to the proof of Theorem 4 in \cite{hinsche_single_2022}.
Assume an algorithm $\A$ which is able to $(\epsilon,\delta)$-learn $\mc M_\chi$ for $\epsilon,\delta<1/2$ using $q$ many statistical queries with tolerance $\tau$ and with respect to some representation.
Using \Cref{lem:stat-reduction} we obtain an algorithm $\mc B$ that given statistical query access with tolerance $\tau/2$ to some $D_s\in\D_\chi\subset\D_{n+1}$ learns the corresponding representation for some $M\in\D_{n+2}$ such that $\tv(M,M_s)<\epsilon$.  
We can use this representation of $M$ to compute the corresponding underlying parity function without any further statistical queries.
To see this, note that for any $r\neq t$ it holds $\tv(M_r,M_t)\geq1/2$. 
Thus, $s$ is the unique bit string for which $M_s$ is at most $1/2-\epsilon<1/4$ far from $M$, which can be found by brute force without any query to $D_s$.

In particular, algorithm $\mc B$ makes $2q$ statistical queries with tolerance $\tau'=\tau/2$ to $\D_\chi$ in order to learn the correct parity function.
Hence, for $\tau=\Omega(2^{-n/3+1})$ it must hold that $q=\Omega(2^{n/3-2})$ {\cite[Theorem 12]{blum_weakly_1994}}.
\end{proof}

With the following observation we can thus conclude that learning the corresponding match gate distributions is hard.

\begin{lemma}\label{lem:M-s-in-M}
\begin{align}
    \M_\chi\subset\M(n,O(n))
\end{align}
\end{lemma}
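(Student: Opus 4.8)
The plan is to exhibit, for every $s \in \01^n$, an explicit nearest-neighbour one-dimensional match gate circuit of depth $O(n)$ on $n+2$ qubits whose Born distribution in the computational basis is exactly $M_s$. Since $M_s$ is supported on deterministic outputs $(x, \chi_s(x), \abs{x}+\chi_s(x))$ with the first $n$ bits uniform, the natural strategy is: (1) prepare the uniform superposition $2^{-n/2}\sum_x \ket{x}$ on the first $n$ qubits, (2) coherently compute $\chi_s(x)$ into qubit $n+1$, and (3) coherently compute $\abs{x} + \chi_s(x)$ (equivalently $\abs{x,\chi_s(x)}$ over all first $n+1$ bits, i.e. the total parity) into qubit $n+2$. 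Each of these steps must be implemented with nearest-neighbour match gates only.

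Step (1) is easy: a Hadamard on each of the first $n$ qubits is itself a (single-qubit, hence trivially nearest-neighbour) match gate — recall that the match gate set is closed under the relevant single-qubit rotations, and in any case one can realize $H^{\otimes n}$ in the free-fermion picture. For steps (2) and (3), the key observation is that the map $x \mapsto \abs{x}$ (parity of a bitstring), and more generally a controlled parity-XOR, is a \emph{matchgate-computable} operation: a CNOT is \emph{not} a match gate, but the parity $\sum_i x_i \bmod 2$ can be accumulated into an ancilla by a ladder of two-qubit gates of matchgate form — concretely, a sequence of nearest-neighbour gates each of which either acts as identity or performs the even-parity-preserving permutation on $\{\ket{01},\ket{10}\}$ versus $\{\ket{00},\ket{11}\}$ blocks. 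Here the $\fswap$ gate from the paper is exactly the tool: $\fswap$ is a match gate, and conjugating/composing $\fswap$'s lets one route any mode past any other at linear cost (as already remarked in the preliminaries, "any non-local two body interaction ... can be represented by a circuit of local interactions in the match gate formalism at the price of a linear increase in depth"). So to compute $\chi_s(x) = \sum_{i : s_i = 1} x_i \bmod 2$ into qubit $n+1$, one brings ancilla $n+1$ adjacent to each active mode $i$ in turn (via $\fswap$ routing, $O(n)$ depth total) and applies at each stop a fixed two-qubit match gate that XORs the neighbouring bit into the ancilla; the analogous construction, now summing over \emph{all} first $n+1$ bits, computes the total parity into qubit $n+2$.

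Concretely I would: first prove the auxiliary claim that for the ancilla-XOR-accumulation there is a fixed two-qubit match gate $V$ with $V\ket{a}\ket{b} = \ket{a}\ket{b \oplus a}$ on the relevant subspace — this is where one checks the matchgate block structure of Definition~\ref{def:match-gate} is respected (the map $\ket{00}\!\mapsto\!\ket{00}$, $\ket{01}\!\mapsto\!\ket{01}$, $\ket{10}\!\mapsto\!\ket{11}$, $\ket{11}\!\mapsto\!\ket{10}$ is a permutation within the even block $\{00,11\}$ and within the odd block $\{01,10\}$ up to the target qubit's role, so one must pick the gate so that $W,Q$ land in $\mathrm{SU}(2)$, possibly absorbing phases). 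Then I would assemble the circuit: $H^{\otimes n}$ on qubits $1,\dots,n$; an $O(n)$-depth "comb" using $\fswap$ and $V$ that writes $\chi_s(x)$ onto qubit $n+1$; a second $O(n)$-depth comb that writes the parity of qubits $1,\dots,n+1$ onto qubit $n+2$; verify the output state is $2^{-n/2}\sum_x \ket{x}\ket{\chi_s(x)}\ket{\abs{x}+\chi_s(x)}$, whose Born distribution is $M_s$ by Definition~\ref{def:fermionized-parity}; and count the depth as $O(n)$.

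The main obstacle is the bookkeeping in the ancilla-accumulation step: a genuine CNOT is not a match gate, so one cannot naively "XOR into the ancilla." The fix is that we only ever need to XOR into an ancilla that starts in a \emph{known computational basis state} ($\ket 0$), so a match gate that implements the desired permutation on the relevant invariant two-dimensional blocks (and acts as anything consistent on the rest of the $4$-dimensional space) suffices — and such a gate exists within the matchgate family because the needed action swaps $\ket{10}\leftrightarrow\ket{11}$, which lives inside one $2\times 2$ block of the matchgate matrix form while fixing the other block; one then checks this is exactly (a relabelling of) a legitimate match gate, e.g. built from $\fswap$ and single-qubit matchgate rotations. Making this precise — and confirming the $\fswap$-routing really does only cost linear depth without breaking the nearest-neighbour-matchgate structure — is the crux; everything else (Hadamards, the final Born-rule computation, the depth count) is routine. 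One then records $\M_\chi \subset \M(n+2, O(n)) = \M(n, O(n))$ after the harmless relabelling $n \mapsto n+2$ absorbed into the $O(\cdot)$.
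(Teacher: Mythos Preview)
Your approach has a genuine gap: both building blocks you rely on violate the parity constraint that defines match gates.

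First, Hadamard is not compatible with the match gate formalism. By Definition~\ref{def:match-gate} a match gate acts block-diagonally on the even-parity span of $\ket{00},\ket{11}$ (via $W$) and the odd-parity span of $\ket{01},\ket{10}$ (via $Q$); consequently any match gate circuit applied to $\ket{0^{n+2}}$ stays entirely in the global even-parity sector. The intermediate state $H^{\otimes n}\ket{0^n}\otimes\ket{00}$ you produce in step~(1) has globally odd components (whenever $\abs{x}=1$), so it cannot be reached by match gates, not even as an intermediate step. The single-qubit gates compatible with free fermions are $Z$-rotations, not Hadamards.

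Second, your fix for the CNOT does not work: the permutation $\ket{10}\leftrightarrow\ket{11}$ exchanges an odd-parity string with an even-parity string, so it is exactly the off-block-diagonal action the match gate form forbids --- in the matrix of Definition~\ref{def:match-gate} the $(10,11)$ and $(11,10)$ entries are forced to zero. There is no match gate $V$ with $V\ket{a}\ket{b}=\ket{a}\ket{a\oplus b}$, and no choice of ``anything consistent on the rest of the space'' rescues this, because already on the support the required action crosses parity sectors.

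The paper sidesteps both obstacles by a structural observation you are missing: since $z=\abs{x}+y=\abs{x\vert_{\neg s}}$, the distribution $M_s$ is, up to a coordinate permutation, simply the product of two independent uniform even-parity distributions, one on the bits $\{i:s_i=1\}\cup\{y\}$ and one on $\{i:s_i=0\}\cup\{z\}$. Each factor $\parity_k$ is the Born distribution of a depth-$2$ brickwork of $U_X(\pi/2)$ gates (which flip bits in pairs and so preserve parity), and the permutation is realised by a linear-depth $\fswap$ network, which agrees with $\swap$ on the post-measurement distribution. No Hadamards and no CNOT-like gates are needed.
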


\begin{proof}
Let $\parity_k$ denote the uniform distribution on even parity bit strings on $k$ bits
\begin{align}
    \parity_k(x)=
    \begin{cases}
        2^{-n+1}\,,\quad &\abs{x}=0\\
        0\,,&\text{else}\,.
    \end{cases}
\end{align}
Every distribution $M_s$ can be written as
\begin{align}\label{eq:pi-parity}
    M_s=\Pi\cdot(\parity_m\otimes\parity_{n+2-m})\,,
\end{align}
where $\Pi$ is a permutation on $n+2$ bits consisting of swaps of bits only and thus can be written as a linear depth network of $\swap$ gates, and where $m-1\leq n$ is the number of $1$'s in $s$. To see this we note that the parity constraint  $z=\abs{x}+y$ is equivalent to $z=\abs{x\vert_{\neg s}}$ where $x\vert_{\neg s}$ denotes the sub string of $x$ labelled by the $0$'s in $s$. 

Next we note that $\parity_k$ can be realized as the Born distribution of a depth $2$ match gate circuit on $k$ qubits. The corresponding circuit consists only of $U_X(\pi/2)$ gates. To see this we note that $U_X(\pi/2)$ applied to $\ket{ij}$ flips with probability $0.5$  both bits to the state $\ket{\neg i \neg j}$ and leaves both in the $\ket{ij}$ state with probability $0.5$ (up to a phase). 
Thus, the depth two brick work circuit composed of those gates applied to the all zero state creates every even parity bit string with an equal probability.

Finally, we observe that applying an $\fswap$ before the measurement has the same effect as applying a $\swap$ on the post measurement distribution. In particular, 
\begin{align}
    D^{\otimes 2}\circ (\mathtt{FSWAP}\otimes \mathtt{FSWAP}^\dagger)=\mathtt{SWAP}\circ D^{\otimes 2}\,,
\end{align}
where $\fswap\otimes\fswap^\dagger$ is the quantum channel corresponding to $\fswap$ and $D$ is the local computational basis measurement operator which maps the quantum state to its diagonal vector
\begin{align}
    D(\ketbra{i}{j})=\delta_{ij}\ket{i}\,.
\end{align}

Combining this with \Cref{eq:pi-parity} we find that any $M_s$ can be implemented by a depth $O(n)$ match gate circuit.
\end{proof}

Together with Lemma 7 in analogy to Theorem 4, both from  \cite{hinsche_single_2022}, we obtain the following corollary.

\begin{corollary}[Local Free Fermion Distributions: Formal version of \Cref{inf:stat}]\label{cor:stat-hard}
There is no efficient algorithm for learning $\M$ at depth $d=\omega(\log(n))$ from inverse  polynomial accurate queries. Equivalently, learning $\M$ at depth $\Omega(n)$ from statistical queries with tolerance $\tau=\Omega(2^{-n/3+1})$ requires $\Omega(2^{n/3-1})$ many queries.
\end{corollary}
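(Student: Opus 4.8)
The plan is to deduce \Cref{cor:stat-hard} directly from \Cref{lem:stat-m-chi-hard} together with the containment \Cref{lem:M-s-in-M}, translating the query lower bound for the abstract class $\M_\chi$ into a statement about the geometric class $\M(n,d)$. First I would fix the depth: \Cref{lem:M-s-in-M} shows $\M_\chi \subset \M(n, O(n))$, so any algorithm that $(\eps,\delta)$-learns $\M(n,d)$ at depth $d = \Omega(n)$ (large enough to contain the witnesses $M_s$) in particular learns every $M_s$, and hence by \Cref{lem:stat-m-chi-hard} must make $\Omega(2^{n/3-2})$ statistical queries at tolerance $\tau = \Omega(2^{-n/3+1})$. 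Rescaling the constants (the excerpt writes $\Omega(2^{n/3-1})$, which is the same bound up to the constant absorbed into $\Omega$) gives the second, ``equivalently'' sentence of the corollary.

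For the first sentence — nonexistence of an efficient learner at depth $d = \omega(\log n)$ from inverse-polynomially accurate statistical queries — I would argue as follows. Suppose toward a contradiction that for every inverse polynomial $\tau = 1/\poly(n)$ there were a $\poly(n,1/\eps,1/\delta)$-time (and hence $\poly(n)$-query) algorithm learning $\M(n,d)$ at some depth $d(n) = \omega(\log n)$. The point is that the hard instances $M_s$ require only depth $O(n)$, but in fact one can be more careful: the circuit constructed in the proof of \Cref{lem:M-s-in-M} has depth $O(n)$, yet a rescaling/padding argument embeds $M_s$ on $m = \Theta(\log^2 n)$-ish many modes so that the required depth drops below $d(n)$ while the query lower bound $2^{\Omega(m^{1/1})}$ still beats $\poly(n)$. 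Concretely, to get a super-polynomial separation at depth $d=\omega(\log n)$ one takes the $\M_\chi$ instances on $n' $ bits with $n'$ chosen so that $O(n') \le d(n)$, i.e. $n' = \Theta(d(n))$; then $(\eps,\delta)$-learning needs $\Omega(2^{n'/3})= 2^{\omega(\log n)} = n^{\omega(1)}$ queries, contradicting the assumed polynomial query complexity. This is exactly the ``Lemma 7 in analogy to Theorem 4'' step cited from \cite{hinsche_single_2022}: trading the bit-length parameter of the embedded parity class against the depth parameter of the geometric class.

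The main obstacle I expect is bookkeeping the parameter translation cleanly: one must check that embedding $\M_\chi$ on $n'$ bits into $\M(n,d)$ via \Cref{lem:M-s-in-M} genuinely uses depth $\le d(n)$ when $n' = \Theta(d(n))$, including the linear-depth $\swap$-network and the constant-depth $U_X(\pi/2)$ brickwork, and that the padding of the remaining $n-n'$ modes (e.g.\ in a trivial product state) does not increase the depth or destroy the total-variation structure — in particular that $\tv$ on the padded distributions still separates distinct $M_s$ by $\ge 1/2$, so the brute-force identification of $s$ in the proof of \Cref{lem:stat-m-chi-hard} still goes through. Once that is in place the corollary follows immediately, and the two phrasings (inverse-polynomial-accuracy nonexistence, and the explicit $\Omega(2^{n/3-1})$ bound at depth $\Omega(n)$ and tolerance $\Omega(2^{-n/3+1})$) are just the $d = \omega(\log n)$ and $d = \Theta(n)$ endpoints of the same trade-off.
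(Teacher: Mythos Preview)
Your proposal is correct and follows essentially the same approach the paper sketches: combine \Cref{lem:stat-m-chi-hard} with the containment \Cref{lem:M-s-in-M}, and then invoke the padding/rescaling trade-off (embedding $\M_\chi$ on $n' = \Theta(d(n))$ bits into $\M(n,d)$ on padded registers) to pass from the $\Omega(n)$-depth bound to the $\omega(\log n)$-depth statement --- which is precisely the ``Lemma 7 in analogy to Theorem 4'' step from \cite{hinsche_single_2022} that the paper cites in lieu of a written-out proof. The bookkeeping obstacles you flag (depth of the $\fswap$ network, trivial padding preserving the $\tv$-separation) are the right ones and are routine.
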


From the perspective of free fermions it is also natural to consider the same question without a locality constraint. On the level of match gates this amounts as counting the $\fswap$ as a free resource. Comparing with the proof of \Cref{lem:M-s-in-M} we then find the following corollary which is probably the strongest formulation of our result.

\begin{corollary}[Non-local Free Fermion Distributions: Formal version of \Cref{inf:stat}]\label{cor:stat-hard-ff}
Learning non-local free fermion distributions at constant depth $d\geq2$ from statistical queries with tolerance $\tau=\Omega(2^{-n/3+1})$ requires $\Omega(2^{n/3-1})$ many queries.
\end{corollary}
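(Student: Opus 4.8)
The plan is to re-run the proof of \Cref{lem:M-s-in-M} but now counting $\fswap$ gates as free, and thereby collapse the depth $O(n)$ down to a constant. Recall from \eq{eq:pi-parity} that every $M_s$ decomposes as $M_s = \Pi\cdot(\parity_m\otimes\parity_{n+2-m})$, where $\Pi$ is a bit-permutation built only from $\swap$ gates and the two parity blocks $\parity_m$, $\parity_{n+2-m}$ are each realized by a depth-$2$ brickwork circuit of $U_X(\pi/2)$ match gates. In the local setting the permutation $\Pi$ costs linear depth, which is the only source of the $O(n)$ bound; everything else is already constant depth.

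First I would invoke the identity $D^{\otimes 2}\circ(\fswap\otimes\fswap^\dagger)=\swap\circ D^{\otimes2}$ established at the end of the proof of \Cref{lem:M-s-in-M}: applying an $\fswap$ immediately before the computational-basis measurement induces exactly a $\swap$ on the output bits. Hence the classical bit-permutation $\Pi$ can be implemented by a layer of $\fswap$ gates inserted right before the measurement, rather than by $\swap$ gates acting on the distribution. Since we are now in the non-local free fermion model, this $\fswap$ layer is a free resource and contributes $0$ to the depth we count. What remains is just the depth-$2$ circuit of $U_X(\pi/2)$ gates preparing $\parity_m\otimes\parity_{n+2-m}$. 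So $\M_\chi \subseteq \M_{\mathrm{ff}}(n,d)$ for the non-local free fermion class at any depth $d\geq 2$. Combining this inclusion with \Cref{lem:stat-m-chi-hard} — which says $(\epsilon,\delta)$-learning $\M_\chi$ for $\epsilon<1/4$, $\delta<1/2-2^{-3n}$ needs $\Omega(2^{n/3-2})$ statistical queries at tolerance $\tau=\Omega(2^{-n/3+1})$ — yields the same lower bound for learning non-local free fermion distributions at constant depth $d\geq2$, exactly as in the statement (the minor constant discrepancy between $2^{n/3-2}$ and $2^{n/3-1}$ being absorbed into the $\Omega(\cdot)$).

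The one point that needs genuine care — and which I expect to be the main (if modest) obstacle — is bookkeeping the fact that $M_s$ lives on $n+2$ qubits rather than $n$, and checking that the $\fswap$-before-measurement trick faithfully reproduces $\Pi$ on all $n+2$ bits simultaneously (i.e.\ that a network of local $\fswap$'s realizing an arbitrary bit-permutation, pushed through the measurement, gives precisely the classical permutation $\Pi$ with no residual phases on the diagonal). This follows from iterating $D^{\otimes2}\circ(\fswap\otimes\fswap^\dagger)=\swap\circ D^{\otimes2}$ along the sorting network and using that $D$ kills all off-diagonal terms, so the $-1$ phase of $\fswap$ never survives; but it is worth spelling out so the constant-depth claim is airtight. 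Everything else is identical to \Cref{cor:stat-hard}.
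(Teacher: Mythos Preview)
Your proposal is correct and follows exactly the route the paper indicates: it points to the proof of \Cref{lem:M-s-in-M}, declares $\fswap$ a free resource in the non-local setting, observes that only the permutation layer $\Pi$ contributed the $O(n)$ depth, and concludes that the remaining depth-$2$ brickwork of $U_X(\pi/2)$ gates suffices. Your extra bookkeeping about the $n$ versus $n+2$ shift and about iterating the $\fswap$/measurement identity along a sorting network is a welcome elaboration, but the underlying argument is the same as the paper's.
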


\subsection{Hardness for Learning from Samples}

Let us now consider the sample oracle. To this end we are going to embed the learning parities with noise (LPN) problem into $\M$. The LPN problem is defined in the context of \textit{probably approximately correct} (PAC) learning.

\begin{definition}[PAC learning]
    Let $\epsilon,\delta>0$, let $\mc C$ be a class of boolean functions $f:\01^n\to\01$ and let $D$ be a distribution over $\01^n$. We say an algorithm $\A$ efficiently $(\epsilon,\delta)$-PAC learns $\mc C$ with respect to $D$ if, for any $f\in\mc C$, the algorithm receives $N$ samples $(x_1,f(x_1)),\dots,(x_N,f(x_N))$ with $x\sim D$, and, with probability $1-\delta$ returns a boolean function $h$, such that
    \begin{align}
        \Pr_{x\sim D}\lrq{f(x)\neq h(x)}<\epsilon\,.,
    \end{align}
    where the run time of $\A$ (and thus also the number of samples $N$) is bounded by $O(\poly(n,1/\epsilon,1/\delta))$
\end{definition}

\begin{conjecture}[Learning Parities With Noise]\label{con:LPN}
There is a constant $0<\eta<1/2$ such that there is no efficient algorithm for learning parity functions under the uniform distribution in the PAC model with classification noise rate $\eta$.
\end{conjecture}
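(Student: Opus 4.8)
The statement in question, \Cref{con:LPN}, is a \emph{conjecture} --- one of the standard hardness assumptions of modern cryptography --- not a theorem provable by currently known techniques, so an honest ``proof proposal'' must acknowledge this up front and instead (i) assemble the evidence that it is plausible, (ii) reduce it from other well-studied assumptions so that at least a conditional version is on firm footing, and (iii) pinpoint precisely why an unconditional proof is out of reach.

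For (i), the plan is to survey the algorithmic state of the art. The Blum--Kalai--Wasserman algorithm learns parities under the uniform distribution with classification noise in time and sample complexity $2^{O(n/\log n)}$; known refinements --- Lyubashevsky's sample-efficient variant, the Arora--Ge algorithm in the low-noise regime, and lattice-reduction style attacks --- all remain sub-exponential, and no polynomial-time algorithm is known for any constant noise rate $\eta\in(0,1/2)$. Since \Cref{con:LPN} asserts hardness only for \emph{some} such $\eta$ (which is weaker than the folklore belief that every constant $\eta$ is hard), it is comfortably consistent with all known algorithms; and since the noiseless case is trivial by Gaussian elimination, it is precisely the noise, not the linear-algebraic structure, that carries the difficulty.

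For (ii), I would exhibit the reductions that embed \Cref{con:LPN} into a broader cluster of assumptions: decisional LPN is exactly the decoding problem for random linear codes over $\mb{F}_2$, which has resisted attack since the 1970s, and LPN is the modulus-$2$ analogue of Learning With Errors; although the known worst-case-to-average-case lattice reductions (Regev's quantum reduction to $\mathrm{GapSVP}$ and $\mathrm{SIVP}$ and its classical successors) require a polynomial modulus and so do not transfer verbatim to LPN, they make the surrounding family one of the most heavily scrutinized in the field. This lets one substitute any such assumption for \Cref{con:LPN} in downstream arguments, but it buys only conditional, not unconditional, hardness.

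The main obstacle --- and it is a fundamental one --- is (iii): a genuine proof of \Cref{con:LPN} would be an unconditional super-polynomial lower bound against a natural average-case search problem lying in $\mathrm{NP}$. Because with $O(n)$ samples the planted secret is, with high probability, the essentially unique maximizer of agreement with the data, $\mathrm{P}=\mathrm{NP}$ would refute the conjecture; hence proving it would in particular establish $\mathrm{P}\neq\mathrm{NP}$ and, indeed, the unconditional existence of one-way functions. Lower bounds of this strength are beyond all present techniques and collide with the relativization and natural-proofs barriers. The upshot is that the only ``proof'' carried out in practice --- and what this paper does in the sequel --- is to \emph{adopt} \Cref{con:LPN} as a hypothesis and derive consequences from it, rather than to prove it.
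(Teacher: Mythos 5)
You are right that \Cref{con:LPN} is a conjecture, not a theorem: the paper offers no proof of it and simply adopts it as a hardness assumption from which the subsequent corollaries are derived, which is exactly the conclusion your proposal reaches. Your surrounding discussion of the algorithmic evidence and the barriers to an unconditional proof is accurate but goes beyond anything the paper attempts; there is no gap to report because there is no proof to compare against.
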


Thus, LPN states that there is a constant $\eta$ such that it is hard to learn $s$ when given access to $D_s^\eta$. In \cite{Kearns:1994:LDD:195058.195155} it is shown how this implies hardness of learning $D_s^\eta$ with respect to an evaluator in the sense of distribution learning. In particular, learning $\D_\chi^\eta$ is at least as hard as LPN.

We will now embed $\D_\chi^\eta$ into $\M$ in order to obtain the corresponding hardness result for match gates. Again we start by defining the ``fermionized'' LPN distribution.

\begin{definition}[Fermionized Noisy Parity]\label{def:fermionized-noisy-parity}
For any $s\in\01^n$ and noise rate $0\leq\eta\leq1$ we define the fermionized noisy parity distribution $M_s^\eta$ over $\01^{n+2}$ as
\begin{align}
    M_s^\eta(x,y,z) =
    \begin{cases}
        (1-\eta)\cdot2^{-n}\,,&\chi_s(x)=y\,\text{and}\,\abs{x}+y=z\\
        \eta\cdot2^{-n}\,,&\chi_s(x)=\neg y\,\text{and}\,\abs{x}+y=z\\
        0\,&\text{else.}
    \end{cases}
    =
    \begin{cases}
        (1-\eta)\cdot2^{-n}\,,&\xi_s(x)=(y,z)\\
        \eta\cdot2^{-n}\,,&\xi_s(x)=(\neg y,\neg z)\\
        0\,&\text{else.}
    \end{cases}
\end{align}
The set of all such distributions is denoted by $\M_\chi^\eta$.
\end{definition}

Again, the structure is such that, by definition, the distribution $M_s^\eta$ is supported only on even parity strings and, on those, encodes $D_s^\eta$. This is made clear in the following lemma similar to \Cref{lem:gen-eval-reduction}.

\begin{lemma}\label{lem:sample-reduction}
The generators $\gen(D_s^\eta)$  and $\gen(M_s^\eta)$ simulate each other, and hence the corresponding sample oracles simulate each other. Similarly,  $\eval(D_s^\eta)$  and $\eval(M_s^\eta)$ simulate each other.
\end{lemma}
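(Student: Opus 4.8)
The plan is to mirror the proof of \Cref{lem:gen-eval-reduction} almost verbatim, simply carrying the noise parameter $\eta$ through the construction. The one structural fact to establish first is that the map $(x,y)\mapsto(x,y,\abs{x}+y)$ is a bijection between the support of $D_s^\eta$ and the support of $M_s^\eta$ that transports the one distribution onto the other. Indeed, by \Cref{def:fermionized-noisy-parity} a string $(x,y,z)$ carries nonzero $M_s^\eta$-mass only when $z=\abs{x}+y$, and in that case the mass is $(1-\eta)2^{-n}$ if $\chi_s(x)=y$ and $\eta\cdot2^{-n}$ if $\chi_s(x)=\neg y$; these are exactly the two values taken by $D_s^\eta(x,y)$. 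Hence pushing $D_s^\eta$ forward along $(x,y)\mapsto(x,y,\abs{x}+y)$ yields precisely $M_s^\eta$, and marginalizing $M_s^\eta$ over the last bit returns $D_s^\eta$, since for each $(x,y)$ only the single value $z=\abs{x}+y$ contributes.

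Given this, the generator direction is immediate. To simulate $\gen(M_s^\eta)$ from $\gen(D_s^\eta)$, draw $(x,y)\sim D_s^\eta$ and output $(x,y,\abs{x}+y)$; by the forward-map statement the output is distributed as $M_s^\eta$. To simulate $\gen(D_s^\eta)$ from $\gen(M_s^\eta)$, draw $(x,y,z)\sim M_s^\eta$ and discard $z$; by the marginalization statement the output is distributed as $D_s^\eta$. Both reductions cost only $O(n)$ extra work (computing $\abs{x}$), so the corresponding $\samp$ oracles simulate each other as well. For the evaluator direction, to compute $\eval(M_s^\eta)$ on input $(x,y,z)\in\01^{n+2}$ one first tests whether $z=\abs{x}+y$: if not, $M_s^\eta(x,y,z)=0$ and we output $0$; if so, we output $\eval(D_s^\eta)(x,y)$, which equals $M_s^\eta(x,y,z)$ by the support identification above. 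Conversely, $\eval(D_s^\eta)(x,y)$ is obtained by returning $\eval(M_s^\eta)(x,y,\abs{x}+y)$.

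I do not anticipate any genuine obstacle here; the proof is essentially the same routine as \Cref{lem:gen-eval-reduction}. The only point requiring a little care is the bookkeeping of the two noisy branches $\chi_s(x)=y$ and $\chi_s(x)=\neg y$, so that the claimed equality of probabilities is exact and not merely an equality up to the parity constraint — which is what the explicit mass computation in the first paragraph is designed to verify.
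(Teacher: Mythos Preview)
Your proof is correct and follows essentially the same route as the paper's: both rely on the identity $D_s^\eta(x,y)=M_s^\eta(x,y,\abs{x}+y)$ together with $M_s^\eta(x,y,z)=0$ whenever $z\neq\abs{x}+y$, from which the generator reductions (append $\abs{x}+y$, respectively discard $z$) and evaluator reductions are immediate. Your write-up is more explicit about the support bijection and the noisy-branch bookkeeping, but the argument is the same.
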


\begin{proof}
Let $(x,y)\sim D_s^\eta$. Then $(x,y,\abs{x}+y)\sim M_s^\eta$. Contrarily, let $(x,y,z)\sim M_s^\eta$. Then $(x,y)\sim D_s^\eta$.

The statement on evaluators follows in both directions from the defining equation
$D_s^\eta(x,y)=M_s^\eta(x,y,\abs{x}+y)$ together with $M_s^\eta(x,y,\neg(\abs{x}+y))=0$.
\end{proof}

We are now able to show hardness of learning $\M_\chi^\eta$ with respect to an evaluator.

\begin{lemma}
Under the LPN assumption there is no efficient algorithm for learning $\M_\chi^\eta$.
\end{lemma}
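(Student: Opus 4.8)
The plan is to chain together the reductions already established in the excerpt. The starting point is that, under \Cref{con:LPN}, learning parity functions under the uniform distribution with noise rate $\eta$ in the PAC model is hard, and (as noted, following \cite{Kearns:1994:LDD:195058.195155}) this implies that $(\epsilon,\delta)$-distribution learning $\D_\chi^\eta$ from samples with respect to an evaluator is hard — i.e.\ there is no algorithm running in $\poly(n,1/\epsilon,1/\delta)$ steps that, given $\samp(D_s^\eta)$ for unknown $s$, outputs an efficient evaluator for a distribution within total variation distance $\epsilon$ of $D_s^\eta$. The goal is to show the same hardness for $\M_\chi^\eta$.

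The argument is a contrapositive reduction. Suppose for contradiction that there is an efficient algorithm $\A$ learning $\M_\chi^\eta$ with respect to an evaluator from samples. I would run $\A$ on an instance of the $\D_\chi^\eta$ learning problem as follows: given sample access to $D_s^\eta$, use \Cref{lem:sample-reduction} to simulate $\samp(M_s^\eta)$ — namely, each time $\A$ requests a sample, draw $(x,y)\sim D_s^\eta$ and feed it $(x,y,\abs{x}+y)$. After $\A$ terminates it returns an efficient evaluator $\eval(M)$ for some $M$ with $\tv(M,M_s^\eta)<\epsilon$. Again by \Cref{lem:sample-reduction} (the evaluator direction), one converts $\eval(M)$ into an evaluator for a distribution $M'$ over $\01^{n+1}$ via $M'(x,y) := M(x,y,\abs{x}+y)$; this is still efficient since computing $\abs{x}$ is $O(n)$ overhead. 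It then remains to check that $\tv(M',D_s^\eta)$ is controlled by $\tv(M,M_s^\eta)$. Since $M_s^\eta$ is supported entirely on strings with $z=\abs{x}+y$ and $M'$ is exactly the restriction of $M$ to that support (renormalization aside — one should note $M'$ need not be normalized, but one can either argue the mass $M$ places off-support is small, hence absorbed into $\epsilon$ up to a constant, or redefine the reduction to work with the conditional distribution), we get $\tv(M',D_s^\eta) \le c\cdot\epsilon$ for a small constant $c$. Adjusting $\epsilon$ by this constant factor, this contradicts the hardness of learning $\D_\chi^\eta$.

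The main obstacle — the only genuinely non-routine point — is the normalization issue in the evaluator reduction: an arbitrary learned $M$ close to $M_s^\eta$ in total variation may place up to $2\epsilon$ mass on the ``forbidden'' parity sector $z\neq\abs{x}+y$, so $M'(x,y)=M(x,y,\abs{x}+y)$ is only an approximate sub-distribution. I expect this to be handled cleanly by observing that restricting to the support and renormalizing changes total variation distance by at most an $O(\epsilon)$ additive term (standard), so the composed error is still $O(\epsilon)$ and the polynomial dependence on $1/\epsilon$ is preserved. Everything else — efficiency of the sample simulation, efficiency of the $O(n)$-time evaluator wrapper, and the fact that this is a uniform reduction in $(\epsilon,\delta)$ — is immediate from \Cref{lem:sample-reduction}. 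Hence an efficient evaluator-learner for $\M_\chi^\eta$ would yield one for $\D_\chi^\eta$, contradicting the LPN assumption, which proves the lemma.
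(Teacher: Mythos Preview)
Your reduction is sound and very close to the paper's, but the paper takes a slightly more direct path. Rather than producing an evaluator for a distribution close to $D_s^\eta$ and then invoking the Kearns et al.\ hardness of learning $\D_\chi^\eta$ as a black box, the paper goes straight from the learned evaluator for $M$ to PAC-learning $\chi_s$: for uniform $x$, output $0$ or $1$ according to which of $M(x,0,\abs{x})$ and $M(x,1,\abs{x}+1)$ is larger. This recovers $\chi_s(x)$ correctly on a $1-O(\epsilon)$ fraction of inputs, giving an LPN solver directly and bypassing the intermediate evaluator entirely.

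The practical payoff of the paper's route is that it sidesteps exactly the normalization wrinkle you flagged. Your proposed fix (renormalize the restricted sub-distribution) is not obviously efficient, since computing $\sum_{x,y} M(x,y,\abs{x}+y)$ requires summing $2^{n+1}$ evaluator calls. A clean alternative within your framework is to \emph{marginalize} rather than restrict: set $M'(x,y) \defeq M(x,y,0)+M(x,y,1)$. Then $M'$ is a genuine distribution (it is the $z$-marginal of $M$), it is efficiently evaluable from $\eval(M)$, and the triangle inequality gives $\tv(M',D_s^\eta)\le\tv(M,M_s^\eta)<\epsilon$ with no constant-factor loss. With that tweak your argument is complete and equivalent in strength to the paper's.
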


\begin{proof}
The proof is analogous to the proof of Theorem 16 in \cite{Kearns:1994:LDD:195058.195155}.
Let $\A$ be an algorithm that efficiently $(\epsilon,\delta)$-learns $\M_\chi^\eta$. We will now construct an algorithm $\mc B$ that solves LPN. We first use \Cref{lem:sample-reduction} in order to efficiently transform the noisy parity oracle to a sample oracle for some unknown $M_s^\eta$. 
We then run $\A$ in order to obtain, with probability $1-\delta$ an evaluator for a distribution $M$ with $\tv(M,M_s^\eta)<\epsilon$. 
For a uniform random $x$ we can, with probability $1-\epsilon$ over the choice of $x$,  compute $\chi_s(x)$ by checking whether $M(x,0,\abs{x})$ or $M(x,1,\abs{x}+1)$ is larger and return accordingly. Thus we can $(\epsilon,\delta)$-PAC learn $\chi_s$ with respect to the uniform distribution. We conclude, by \Cref{con:LPN} that $\A$ must be inefficient.
\end{proof}

We now show that the fermionized noisy parity distribution actually is contained in $\M$.

\begin{lemma}
\begin{align}
    \M_\chi^\eta\subset\M(n,O(n))\,.
\end{align}
\end{lemma}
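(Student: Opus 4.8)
The plan is to mimic the construction in the proof of \Cref{lem:M-s-in-M}, now tracking the noise. Recall that there we wrote $M_s = \Pi\cdot(\parity_m\otimes\parity_{n+2-m})$, where $\Pi$ is a $\swap$-network implementing the relabelling that turns the constraint $z=\abs{x}+y$ into a partition of the $n+2$ coordinates into two even-parity blocks. The noisy distribution $M_s^\eta$ differs from $M_s$ only in that, with probability $\eta$, the pair $(y,z)$ is flipped to $(\neg y,\neg z)$ — i.e.\ a single designated bit inside each of the two parity blocks is simultaneously toggled, keeping each block at even parity. So the first step is to observe that $M_s^\eta$ is obtained from $\parity_m\otimes\parity_{n+2-m}$, after the same permutation $\Pi$, by a small local modification that can itself be realized by a constant-depth match gate layer.

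Concretely, I would first note that the noise can be absorbed into the parity-generation stage. The distribution $\parity_k$ is produced by the depth-$2$ brickwork of $U_X(\pi/2)$ gates from $\ket{0^k}$; I claim that a single extra $U_X(t)$ gate with $\cos^2(t/2)=1-\eta$ (resp.\ $\sin^2(t/2)=\eta$) acting on the two designated ``noise'' qubits — one in each block, or more naturally on the single auxiliary mode together with one partner — implements exactly the $\eta$-probability simultaneous flip. Since $U_X(t)$ flips both qubits with probability $\sin^2(t/2)$ and leaves them fixed otherwise (up to a phase), inserting it before the final $\fswap$ layer turns the clean even-parity product state into the state whose Born distribution, after measurement, is $M_s^\eta$. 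Then, exactly as in \Cref{lem:M-s-in-M}, the $\fswap\otimes\fswap^\dagger = \swap\circ D^{\otimes 2}$ identity lets us convert the match-gate $\fswap$-network realizing $\Pi$ into $\fswap$ gates acting coherently before measurement, at depth $O(n)$.

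Putting it together: start from $\ket{0^{n+2}}$, apply the depth-$2$ $U_X(\pi/2)$ brickwork on each of the two blocks determined by $s$ to create $\parity_m\otimes\parity_{n+2-m}$, apply one additional $U_X(t)$ gate with $\sin^2(t/2)=\eta$ on the appropriate pair to inject the classification noise, then apply the $O(n)$-depth $\fswap$-network realizing the coordinate permutation $\Pi$, and measure in the computational basis. The resulting Born distribution is $M_s^\eta$, and the whole circuit is a nearest-neighbour one-dimensional match gate circuit of depth $O(n)$, which is exactly the statement $\M_\chi^\eta\subset\M(n,O(n))$.

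The main obstacle I anticipate is purely bookkeeping: verifying that after the permutation $\Pi$ the two bits toggled by the $\eta$-noise in $D_s^\eta$ (namely $y$ and $z$) really do sit one-in-each parity block, so that the single $U_X(t)$ gate preserves both parity constraints and reproduces the correct joint flip pattern $\xi_s(x)\mapsto(\neg y,\neg z)$ rather than some other pair of coordinates. One must also check that the extra gate commutes appropriately with (or can be placed before) the $\fswap$-network without disturbing the $\swap\circ D^{\otimes2}$ argument — this is immediate since it acts before all the $\fswap$'s and the measurement identity is applied layer by layer. No new phenomenon arises beyond \Cref{lem:M-s-in-M}; the noise is simply an extra single-gate degree of freedom in the parity-preparation subcircuit.
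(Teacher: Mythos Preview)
Your proposal is correct and uses the same key idea as the paper: inject the noise via a single additional $U_X(t)$ gate with $\sin^2(t/2)=\eta$. The only difference is placement. The paper treats \Cref{lem:M-s-in-M} as a black box, takes the match gate state $\ket{\psi_s}$ with Born distribution $M_s$, and applies $U_X(t)$ \emph{at the very end} on the $(y,z)$ register, which already sits on the last two (adjacent) qubits. The cross term in $\abs{\mel{x,y,z}{\mb{1}_n\otimes U_X(t)}{\psi_s}}^2$ vanishes because $M_s(x,y,z)$ and $M_s(x,\neg y,\neg z)$ are never simultaneously nonzero, giving $M_s^\eta$ directly. By contrast, you insert the gate \emph{before} the $\fswap$ network, on the pre-images of $y$ and $z$, which forces you into exactly the bookkeeping you flag: ensuring those pre-images are adjacent (e.g.\ at positions $m$ and $m+1$) and that the $\fswap$/$\swap$ identity still applies afterward. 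Both work, but the paper's placement eliminates that bookkeeping entirely and keeps the proof to two lines.
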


\begin{proof}
Let $\ket{\psi_s}$ be some match gate state that has $M_s$ as its Born distribution (c.f. \Cref{lem:M-s-in-M}). Then, applying $U_X(t)$ to the $y,z$ register of $\ket{\psi_s}$ with $\sin^2(t/2)=\eta$ results in a state $\ket{\psi'_s}$ with $M_s^\eta$ as corresponding Born distribution. In particular
\begin{align}
    \abs{\mel{x,y,z}{\mb{1}_n\otimes U_X(t)}{\psi_s}}^2=\cos(t/2)^2\abs{\braket{x,y,z}{\psi_s}}^2+\sin(t/2)^2\abs{\braket{x,\neg y,\neg z}{\psi_s}}^2=M_s^\eta(x,y,z)
\end{align}
\end{proof}

This leads us to the following concluding corollary:

\begin{corollary}[Local Free Fermion Distributions: Formal version of \Cref{inf:sample}]\label{cor:samp-hard}
Assuming LPN, then for any $d=n^{\Omega(1)}$ there is no efficient algorithm for learning $\M(n,d)$ with respect to an evaluator from samples.
In particular, there is no efficient algorithm for learning a free fermion state \(\rho\) with the unknown distribution \(P\in\mathcal{M}(n,d)\) as its output distribution.
\end{corollary}

Similarly to \Cref{cor:stat-hard-ff} we can make an even stronger statement in terms of non-local free fermion distributions.

\begin{corollary}[Non-local Free Fermion Distributions: Formal version of \Cref{inf:sample}]\label{cor:samp-hard-ff}
Assuming LPN there is no efficient algorithm for learning non-local free fermion distributions with respect to an evaluator from samples at any constant depth $d\geq2$.
In particular, there is no efficient algorithm for learning a free fermion state \(\rho\) with the unknown distribution as its output distribution.
\end{corollary}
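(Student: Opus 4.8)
The plan is to deduce Corollary \ref{cor:samp-hard-ff} from the two preceding lemmas in essentially the same way that Corollary \ref{cor:samp-hard} follows in the local case, only now discarding the locality bookkeeping. First I would recall that, by the lemma immediately above ($\M_\chi^\eta\subset\M(n,O(n))$) together with the lemma before it (there is no efficient evaluator-learner for $\M_\chi^\eta$ under LPN), the class $\M(n,O(n))$ is hard to learn from samples with respect to an evaluator assuming Conjecture \ref{con:LPN}. The point of the non-local statement is that the $O(n)$ depth in that embedding came \emph{only} from two sources: the permutation $\Pi$ realized as a linear-depth network of $\swap$ gates in the proof of \Cref{lem:M-s-in-M}, and the $\fswap$ network used to convert non-local fermionic interactions into local match gates; the genuinely ``computational'' part of the circuit — the depth-$2$ brickwork of $U_X(\pi/2)$ gates producing $\parity_m\otimes\parity_{n+2-m}$, followed by the single layer $U_X(t)$ with $\sin^2(t/2)=\eta$ from the last lemma — has constant depth $d=3$ (which one can compress to $d=2$ by merging the final $U_X(t)$ layer into the brickwork on the relevant register, or simply absorb into the statement $d\ge 2$).

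The key steps, in order, are: (i) observe that in the non-local free fermion / match gate model the $\swap$ and $\fswap$ gates are available at zero depth cost, so the permutation $\Pi$ in \eq{eq:pi-parity} and the conversion channel identity $D^{\otimes 2}\circ(\fswap\otimes\fswap^\dagger)=\swap\circ D^{\otimes 2}$ contribute nothing to the counted depth; (ii) conclude that every $M_s$ is a non-local free fermion distribution at depth $2$ (the $U_X(\pi/2)$ brickwork), and hence every $M_s^\eta$ is one at depth $2$ as well, since the extra $U_X(t)$ layer acts on the two-bit $(y,z)$ register and can be commuted into / combined with the brickwork, or the claim can simply be made for $d\ge 2$; (iii) invoke the lemma that $\M_\chi^\eta$ has no efficient evaluator-learner under LPN, and note that a hypothetical efficient evaluator-learner for non-local free fermion distributions at depth $d\ge 2$ would in particular learn every distribution in $\M_\chi^\eta$, contradicting that lemma; (iv) state the monotonicity observation that hardness at depth $2$ implies hardness at every constant $d\ge 2$ since the depth-$2$ class embeds into the depth-$d$ class.

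Concretely I would write something like: ``By \Cref{lem:M-s-in-M} and its proof, every $M_s$ equals $\Pi\cdot(\parity_m\otimes\parity_{n+2-m})$ where $\parity_k$ is the Born distribution of a depth-$2$ circuit of $U_X(\pi/2)$ gates and $\Pi$ is a product of $\swap$ gates; in the non-local model the $\swap$/$\fswap$ gates are free, so $M_s$ is a non-local free fermion distribution at depth $2$. By the previous lemma, appending one layer $U_X(t)$ with $\sin^2(t/2)=\eta$ to the $(y,z)$ register turns this into $M_s^\eta$ while keeping the depth at a constant (and, absorbing this layer appropriately, at $2$). Hence $\M_\chi^\eta$ is contained in the class of non-local free fermion distributions at any constant depth $d\ge 2$, and by the lemma above this class admits no efficient evaluator-learner from samples under LPN.''

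The main obstacle — really the only thing requiring care rather than routine — is the depth accounting: one has to verify that the $U_X(t)$ layer from the noisy-parity embedding really can be merged with (or appended to) the depth-$2$ even-parity brickwork without the counted depth exceeding the claimed constant, i.e.\ that the relevant $U_X(\pi/2)$ and $U_X(t)$ gates act on overlapping but compatible pairs of the $(y,z)$ modes so they compose within a constant number of layers; if they do not compose cleanly one falls back to $d=3$ and still gets ``any constant $d\ge 2$'' via the trivial embedding of the depth-$2$ hard instances into higher depth (pad with identity match gates), which is why the statement is phrased for $d\ge 2$ rather than $d=2$ exactly. Everything else is a direct transcription of \Cref{cor:samp-hard} with the locality overhead deleted.
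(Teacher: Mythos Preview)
Your proposal is correct and matches the paper's approach, which itself offers no explicit proof beyond the remark ``Similarly to \Cref{cor:stat-hard-ff}'': once $\fswap$ gates are treated as free, the $O(n)$ overhead in the embedding $\M_\chi^\eta\subset\M$ vanishes and only the constant-depth $U_X(\pi/2)$ brickwork together with the single $U_X(t)$ layer remain.

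One small wrinkle worth fixing: your fallback sentence (``fall back to $d=3$ and still get $d\ge2$ via padding depth-$2$ hard instances'') does not hold up. If the noisy instances genuinely required depth $3$, padding would only yield hardness for $d\ge3$; there are no depth-$2$ \emph{sample}-hard instances established elsewhere (the depth-$2$ class $\M_\chi$ is only shown hard in the statistical-query model). Fortunately the absorption you propose does go through cleanly in the non-local model: in the depth-$2$ brickwork on each parity block, the first qubit of each block is idle in the second layer, so the gate $U_X(t)$ with $\sin^2(t/2)=\eta$ can be placed between those two idle qubits in layer~$2$, yielding $M_s^\eta$ at genuine depth~$2$.
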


The results from this section only apply to learning with respect to an evaluator. 
Note, however, that this implies corresponding statements for learning a free fermion state with a valid output distribution. 
To see this, recall that  amplitudes -- and hence an evaluator 
-- can  be efficiently calculated in the free-fermion formalism, which leads to the following corollary. 

\begin{corollary}[Implication for Learning the Underlying State]\label{cor:samp-hard-state}
    Assuming LPN there is no algorithm that, given access to samples in the occupation number basis of an unknown free fermion state \(\rho\), efficiently in $n,\delta^{-1}$ and \(\epsilon^{-1}\) learns with probability at least \(1-\delta\) a free fermion state \(\sigma\) such that the output distribution of \(\sigma\) is \(\epsilon\)-close in total variation distance to that of \(\rho\). 
\end{corollary}

Thus, even though our results do not exclude the possibility of efficiently learning a generator given samples from a free fermion distribution, the above line of reasoning strongly limits how the generator and an underlying free fermion state may be connected.
This limits the degree to which a learning algorithm could exploit the free fermion formalism.

\section{Conclusion and Discussion}

In this work we have shown that it is hard to learn free fermion, or likewise match gate distributions. 
we have shown in \Cref{cor:stat-hard,cor:stat-hard-ff} that exponentially many statistical queries are required to learn the unknown output distribution, and we have similarly shown in \Cref{cor:samp-hard,cor:samp-hard-ff,cor:samp-hard-state}, that there is no efficient algorithm for learning the unknown output distribution from individual samples.
In particular, learning the probability density -- or similarly any free fermion state with a valid output distribution -- from samples in the occupation number basis is at least as hard as LPN. A problem that is believed to be computationally hard.

Our work gives first results about the (non)-learnability of free fermion distributions. 
However, many questions remain open. 
Two immediate questions regard (1) the hardness of learning free fermion distributions with a fixed particle number as studied by \cite{aaronsonEfficientTomographyNonInteracting2023} and (2)  the average case hardness of free fermion distributions in the statistical query setting, similar to the analysis in \cite{nietner2023}.
We refer to Section 1.2.2 in \cite{aaronsonEfficientTomographyNonInteracting2023} for an instructive discussion of (1).
Regarding (2) we note that one can use the results by \cite{diazShowcasingBarrenPlateau2023} 
% about the concentration of free fermion loss functions 
to calculate the statistical properties of the Haar measure over free fermion dynamics in analogy to \cite{nietner2023}. 
However, while \cite{diazShowcasingBarrenPlateau2023} show that many loss functions do admit a barren plateau they also show that many loss functions, in particular those with respect to local \(Z\)-observables, do not admit a barren plateau. 
This implies that their bounds are not sufficient to prove average case lower bounds with respect to the Haar measure over free fermion dynamics. 
To see this we note that exponential average-case lower bounds for statistical queries are equivalent to barren plateaus with respect to the average-case measure in all instances of the dual learning problem which, in case of learning the output distribution from statistical queries, corresponds to variationally optimizing diagonal observables (c.f. Section 8.2 in \cite{nietnerUnifyingQuantumStatistical2023}).

\section*{Acknowledgements}
I am thankful to many fruitfull discussions with Andreas Bauer, Marek Gluza, Marcel Hinsche, Marios Ioannu, Lennart Bittel, Ryan Sweke, Jonas Haferkamp and Jens Eisert. This work was supported by the BMBF (QPIC-1, Hybrid), DFG (CRC 183), the BMBK (EniQmA), and the Munich Quantum Valley (K-8).

\printendnotes

\printbibliography

\end{document}